\definecolor{darkred}{rgb}{0.5,0.0,0.0}
\definecolor{darkblue}{rgb}{0.2,0.1,0.6}
\newtheorem{dfn}{Definition}
\newcommand{\keywords}[1]{\par\addvspace\baselineskip
\noindent\keywordname\enspace\ignorespaces#1}
\begin{document}
\pagestyle{empty}
\mainmatter  

\title{Studying Node Cooperation in Reputation Based Packet Forwarding within Mobile Ad hoc Networks}

\titlerunning{Studying Cooperation in Reputation Based Packet Forwarding in MANETs}

%
%
\author{Sara Berri\inst{1,2}\and Vineeth Varma\inst{3}\and Samson Lasaulce\inst{2}\and Mohammed Said Radjef\inst{1}\and Jamal Daafouz\inst{3}\thanks{The present work is supported by the LIA project between CRAN, Lorea and the International University of Rabat. }}
\authorrunning{S. Berri\and V. Varma\and S. Lasaulce\and M.S. Radjef\and J. Daafouz }

\institute{Research Unit LaMOS (Modeling and Optimization of Systems), Faculty of Exact Sciences, University of Bejaia, Bejaia, 06000, Algeria
\and L2S (CNRS-CentraleSupelec-Univ. Paris Sud), Gif-sur-Yvette, France \and CNRS and Universit\'e de Lorraine, CRAN, UMR 7039, France\\ \{sara.berri, samson.lasaulce\}@l2s.centralesupelec.fr, \{vineeth.satheeskumar-varma, Jamal.Daafouz\}@univ-lorraine.fr, radjefms@gmail.com}

%
%

\toctitle{Studying Node Cooperation in Reputation Based Packet Forwarding within MANETs}
\tocauthor{}
\maketitle

\begin{abstract}
In the paradigm of mobile Ad hoc networks (MANET), forwarding packets originating from other nodes requires cooperation among nodes. However, as each node may not want to waste its energy, cooperative behavior can not be guaranteed. Therefore, it is necessary to implement some mechanism to avoid selfish behavior and to promote cooperation. In this paper, we propose a simple quid pro quo based reputation system, i.e., nodes that forward gain reputation, but lose more reputation if they do not forward packets from cooperative users (determined based on reputation), and lose less reputation when they chose to not forward packets from non-cooperative users. Under this framework, we model the behavior of users as an evolutionary game and provide conditions that result in cooperative behavior by studying the evolutionary stable states of the proposed game. Numerical analysis is provided to study the resulting equilibria and to illustrate how the proposed model performs compared to traditional models.
\keywords{Mobile ad hoc networks, Packet forwarding, Cooperation, Evolutionary game theory, ESS, Replicator dynamics.}
\end{abstract}

\section{Introduction}\label{sec:Intro}
A mobile ad hoc network (MANET) is a wireless multi-hop network formed by a set of mobile independent nodes. A key feature about MANETs is that they are self organizing and are without any established infrastructure. The absence of infrastructure implies that all networking functions, such as packet forwarding, must be performed by the nodes themselves \cite{book1}. Thus, multi-hop communications rely on mutual cooperation among network's nodes. As the nodes of an ad hoc network have limited energy, the nodes may not want to waste their energy by forwarding packets from other nodes. If all the nodes are controlled by a central entity, this will not be a major issue as cooperation can be a part of the design, but in applications where each node corresponds to an individual user, it is crucial to develop mechanisms that promote cooperation among the nodes.

Several works in the literature provide solutions based on incentive mechanisms, such as those based on a credit concept \cite{c2i8}, \cite{c2i9}, \cite{ref2016} etc., whose idea being that nodes pay for using some service and they are remunerated when they provide some service (like packet forwarding). Others like \cite{a4}, \cite{a5} use reputation-based mechanisms to promote cooperation. Game theory has been a vital tool in literature to study the behavior of self-serving individuals in serval domains including MANETs. In \cite{hub}, \cite{b1}, \cite{c2i7}, etc. the interaction among nodes in packet forwarding is modeled as a one shot game based on prison's dilemma model, extended then to repeated game. Furthermore, evolutionary game theory is introduced in \cite{ser1}, \cite{ser}, \cite{evo}, to study the dynamic evolution of system composed of nodes and to analyze how cooperation can be ensured in a natural manner. In \cite{ser1} the evolutionary game theory is applied to study cooperation in packet forwarding in mobile ad hoc networks. Here, the authors used the prison's dilemma-based model \cite{hub} and the aim was to implement several strategies in the game and to evaluate performance, by observing their evolution over time.

The aforementioned works rely on incentive mechanisms, which has been proved to improve nodes cooperation. However, implementing such solutions often result in a large computational complexity during the game. We would like to find an answer to the following question, "Is it possible to achieve global cooperation in packet forwarding by a simple and natural way?". In this paper, we model the nodes interaction in a MANET as an evolutionary game by proposing a new formulation of the packet forwarding and reputation model. We introduce a simple reputation system with a quid pro quo basis, wherein, reputation is gained by forwarding packets and is lost when refusing to forward. However, a key feature is that the reputation loss depends on the packet source. If the packet is from a node with low reputation, less reputation is lost by not forwarding that packet. This simply means that selfish users will naturally have low reputation, while users are encouraged to help other cooperative users, resulting in a significantly different model from the likes of \cite{ser1}, \cite{evo} etc. With this model, we study two node classes, one which try to maintain a certain high reputation, and another class which disregard their reputation. We show that nodes are likely to cooperating by means of evolutionary game theory concepts and provide numerical results showing how the proposed model improves network performance. The novel reputation model we propose will naturally result in the cooperative users cooperating among each other and refusing to forward packets from selfish users, thereby eliminating the need for a third party to punish selfish behavior.

The remainder of the paper is structured as follows. In Sec. \ref{sec:Gamemod} we formulate the reputation and game models. We propose to analyze the evolutionary game in Sec. \ref{sec:evolanal}, by providing the associated equilibrium, and studying strategies evolution. This allows us to determine condition ensuring global cooperative behavior. The numerical results are presented in Sec. \ref{sec:numerical analysis}; it has to be noted that the results are the same for any game settings satisfying the provided conditions and not only for the given examples. Finally, Sec. \ref{sec:conc} presents the conclusion.




\section{Problem formulation and proposed game model}\label{sec:Gamemod}
 In this section, we provide a game model to study the packet forwarding interaction. We consider a packet forwarding game, where the players are the nodes, each of them can be cooperative, by forwarding other nodes' packets, or non-cooperative, by dropping other nodes' packets. Thus, the players have to choose a strategy $s_i$ from the strategy set $\mathcal{S}$=\{C, NC\}. The actions C and NC mean cooperative and non-cooperative, respectively. The two player packet forwarding game can be defined in its strategic-form as following.
  \begin{equation}\label{eqgame}
\mathcal{ G}^{(2)}= <\{1, 2\},~\{\mathcal{S}_{i}\}_{i\in \mathcal{I}},~\{u_{i}\}_{i\in \mathcal{I}} >,
  \end{equation}
  where:
  \begin{description}
    \item[$\bullet$] $\mathcal{I}=\{1,2\}$ is the set of players (two players), that are the network nodes;
    \item[$\bullet$] $\mathcal{S}_i$ is the set of pure strategies of player $i \in \mathcal{I}$, which is the same for all the players, corresponding to $\mathcal{S}$=\{C, NC\};
\item [$\bullet$] $u_i$ is the utility of player $i \in \mathcal{I}$, that depends on its behavior and that of its opponent. To demonstrate the utility formulation, we consider the case of a pair of nodes from the network, within which a node may act as a sender and a relay (and vice versa).  Thus, the players' utility can be represented by a payoff matrix as given by  \eqref{payoffmatrix}.
\begin{equation}\label{payoffmatrix}
    A=\bordermatrix{
&\mathrm{C}&\mathrm{NC}\cr \mathrm{C}& \lambda-1&  -1 \cr \mathrm{NC}& \lambda& 0\cr },
   \end{equation}
  \end{description}
where: $\lambda>0$ is a coefficient representing the benefit associated to successfully sending a packet while spending a unit of energy. The first player actions are along the rows and the second players along the columns. Naturally, when $\lambda<1$ no nodes are motivated to cooperate as the energy cost relative to the gain from having packets relayed is too high. In the interesting case (the case where a MANET framework is feasible) of $\lambda>1$, the outcome of the proposed game can be characterized by the well-known Nash Equilibrium (NE), which is the strategy profile from which no player has interest in changing unilaterally its strategy. The resulting strategy profile is beneficial for players when they act individually. However, the NE of the packet forwarding game is inefficient, corresponding to drop all the time, and provides for players $0$ as utility. Thus, to overcome this problem we propose to add to the game \eqref{eqgame} a reputation model, that defines the reward and the cost in terms of reputation according to the node decision, cooperative or non-cooperative. On the other hand, it would be better to model the interactions among all the $N$ nodes and not just the two-player case. To deal with this, we propose to introduce evolutionary game theory, where the dynamical evolution of game strategies is studied through pairwise interactions.

In the following section, we provide the reputation model we propose, and construct the new packet forwarding game including the reputation mechanism as an integrated system. That means the game is played taking into account the reputation, which we show can be interpreted as a constraint on the strategy space, while the nodes aim to maximize their utility function.
 \subsection{Reputation model}

We assume that there is a reputation system introduced in order to discourage selfish behavior and reward cooperative behavior by separating these two classes of nodes. The reputation system is represented as a function depending on the own action and the opponent's action. The reputation increases by a certain margin $\delta_r$ whenever a node relays the packet from another node, chooses C as action. Reputation is lost whenever a node refuses to relay a packet, by choosing NC as action. However, the loss of reputation from refusing to relay the packet from a node with low reputation $\delta_b$ is smaller than the loss incurred by refusing to relay the packet from a well reputed node $\delta_g$. For ease of notation the reputation of a user $i \in \mathcal{I}$ is given by $R_i(t)$, and if $R_i(t) >0$ the node has a good reputation and otherwise bad reputation. The change in reputation is given by:
\begin{equation}
R_i(t+1)= R_i(t) + d_i(t) \delta_r - (1-d_i(t)) ( \delta_g \mathbf{1}(R_{j}>0) +     \delta_b \mathbf{1}(R_{j}\leq 0) ),
\end{equation}
 where: $d_i(t) \in \{0,1\}$ is the decision to relay or not at time $t$. 0 corresponds to the action NC, and 1 to C. $j$ is a random variable indicating the sender requesting $i$ to relay. $\mathbf{1}$ is the indicator function, it is one when the condition inside the brackets is satisfied and $0$ otherwise.

We consider two primary classes of nodes based on their reputation value. The set $H$ of "Hawks" who are selfish (non-cooperative) and don't care about reputation. As a result these nodes never relay packets, i.e., $s_i=$NC $\forall i \in$ $H$, but use the network and try to make the other nodes relay their packets, and so we have $R_i(t) <0$ $\forall i \in H$. These nodes will always have $d_i(t)=0$ for all $t$ and therefore will also have a low reputation.

The other class of nodes are the set $D$ of "Doves", who try to maintain a positive reputation. These nodes will have a strategy $s$ such that on average their reputation gain is positive. Let us denote the dove population share (fraction of users who are in the dove class) by $p$. The population share of hawks will simply be given by $1-p$.

\subsection{Utility maximization}
In this subsection, we present how reputation system is integrated in the packet forwarding game in order to improve game outcomes and avoiding the non-cooperative situation. As even the doves do not want to waste energy, they will not attempt to transmit ever single packet, but only such that their average reputation gain is at least $0$ (reputation must be an increasing function). We assume that even a cooperative node, i.e., the Dove class, does not relay packets all the time. The doves have a mixed strategy to relay messages, and it relays messages from other doves with probability $s_d$, and from hawks with a probability $s_h$, i.e., the action C is chosen with different probabilities depending on the opponent's class.

As a result, the net utility is given by the number of times their packets get forward subtracted by the energy cost paid is given by them. The expected payoff of doves is given by the formula \eqref{eqUtiD}.
\begin{equation}\label{eqUtiD}
U(D,p) =  (\lambda-1) p s_d  - s_h (1-p).
\end{equation}
This must be maximized over $s_d,s_h$ while maintaining a positive reputation, i.e., $\mathbb{E}[R_i(t+1)-R_i(t)] \geq 0$ or
\begin{equation}\label{eq:repcon}
p( s_d \delta_r - (1-s_d) \delta_g ) + (1-p) ( s_h \delta_r - (1-s_h) \delta_b) \geq 0.
\end{equation}

Therefore for a given population share of doves $p$, we can find the strategy of doves by solving the following optimization problem.
$$\max_{s_d,s_h} U(D,p)$$
\begin{equation}\label{Opt1}
\begin{array}{ll}
  p( s_d \delta_r - (1-s_d) \delta_g ) + (1-p) ( s_h \delta_r - (1-s_h) \delta_b) \geq 0 \\
 0 \leq s_d\leq 1,~ 0 \leq s_h\leq 1.  \\
     \end{array}
     \end{equation}

Hawks have the same utility function, but don't have a reputation constraint, therefore, the corresponding expected payoff is given by the formula \eqref{eqUtiH}.
\begin{equation}\label{eqUtiH}
U(H,p)= \lambda p s_h.
\end{equation}
Thus, the expected payoff of any individual is given by \eqref{eqEXp}:
 \begin{equation}\label{eqEXp}
    U(p,p)=pU(D,p)+(1-p)U(H,p),
\end{equation}
where $p$ is the population profile.

If $\lambda>1$, $U(D,p)$ is maximized trivially by choosing $s_d=1$. Therefore, $s_h$ will be the smallest such that constraint (\ref{eq:repcon}) holds.
\begin{equation}
\begin{array}{rll}
 \nonumber
 p( s_d \delta_r - (1-s_d) \delta_g ) + (1-p) ( s_h \delta_r - (1-s_h) \delta_b) & \geq & 0 \\ \nonumber
\Rightarrow p\delta_r + (1-p) ( s_h \delta_r - (1-s_h) \delta_b)&\geq&0  \\ \Rightarrow \nonumber \label{eq1proof}
    p\delta_r + (1-p) (s_h (\delta_r+\delta_b)-\delta_b)&\geq&0  \\\Rightarrow  \nonumber
s_h\geq  \frac{ (1-p) \delta_b-p \delta_r } { (\delta_b + \delta_r)(1-p) }.
\end{array}
\end{equation}
Thus, we have:
\begin{equation}\label{eqq}
    s_h^{\star}= \max\left\{\frac{ (1-p) \delta_b-p \delta_r } { (\delta_b + \delta_r)(1-p) },0 \right\}.
\end{equation}

Note that the introduction of $s_h<1$ is one of the main novelties of this paper, which can be different from $s_h=1$ as defined in the traditional forwarding game payoff  \cite{hub}, \cite{b1}, \cite{c2i7}, \cite{ser1}, \cite{evo}, etc. where the cooperative nodes forward packet all the time without making distinction among the opponent nodes that can be cooperative, belonging to $D$, or non-cooperative, belonging to $H$. Furthermore, the proposed reputation system is simpler and can be defined as a constraint when nodes take decision purely based on the reputation class of the packet source node.

\subsection{Evolutionary game formulation}

We can formally define the resulting evolutionary game with the strategic form
\begin{equation}\label{eqevogame}
\mathcal{G}= < \{ D, H \}, \{ (s_d,s_h )\} \times \{0\}, p \in [0,1],   \{u_{c}\}_{c\in \{D,H\} } >,
  \end{equation}
  where:
  \begin{description}
    \item[$\bullet$] $\{D,H\}$ are the reputation classes (or population types);
     \item[$\bullet$] $\{ (s_d,s_h )\}$ is the set of strategies playable by $D$, with $H$ always playing $0$ or NC strategy;
    \item[$\bullet$] $p$ is the population share of class $D$;
    \item[$\bullet$] $u_{c}$ is the utility of class $D$ or $H$ as defined in (\ref{eqUtiD}) and (\ref{eqUtiH}).
\end{description}

Our objective in the following section is to study the evolution of strategies in this game, and analyze possible equilibrium points.

\section{Evolutionary game analysis}\label{sec:evolanal}
Evolutionary game theory study the dynamic evolution of a given population based on two main concepts: evolutionary stable strategy (ESS) and replicator dynamics. Let $p$ the initial population profile. We assume that a proportion $\varepsilon$ of this population plays according to another profile $q$ (population of mutants), while the other individuals keep their initial behavior $p$. Thus, the new population profile is $(1-\varepsilon)p+\varepsilon q$.  The expected payoff of a player that plays according to $p$  is $U(p,(1-\varepsilon)p+\varepsilon q)$, and it is equal to $U(q,(1-\varepsilon)p+ \varepsilon q)$ for the one playing according to $q$.
 \begin{dfn} \emph{\cite{Maynard73}}
 A strategy $p \in \Delta$ is an evolutionary stable strategy \emph{(ESS)}, if :
$\forall q \in \Delta, \exists~ \bar{\varepsilon} = \bar{\varepsilon}(q) \in (0,1), \forall \varepsilon \in (0,\bar{\varepsilon})$
\begin{equation}\label{eq2}
    U(p, (1-\varepsilon) p + \varepsilon q) > U(q, (1-\varepsilon) p + \varepsilon q),
 \end{equation}
  $\bar{\varepsilon}$ is called \emph{invasion barrier} of the strategy $p$, which may depend on $q$.  
\label{def1}
\end{dfn}
The replicator dynamics is the process that specifies how a population is distributed over the pure strategies set in a game evolving in time.
\begin{dfn}\label{def2} \emph{(Replicator dynamics)}.
The replicator dynamics is given by (\ref{eq3}) \cite{l5}:
 \begin{equation}\label{eq3}
\dot{p_{i}}=p_{i}[U(i,p)-\sum_{j=1}^{|\mathcal{S}|}p_jU(j,p)], ~~~~~i\in \{1,\ldots,|\mathcal{S}|\}.
\end{equation}
\end{dfn}
The system (\ref{eq3}) describes the replication process in continuous time. It gives the percentage of individuals newly playing strategy $s_i$ in the next period, it depends on the initial value $p_i (t_0)$.
Using the relation \eqref{eq3}, the replicator dynamics of the proposed game is:
\begin{equation}\label{eqREP}
 \dot{p} = p(1-p)(p(\lambda-1)(1-s_h^{\star})-s_h^{\star}).
 \end{equation}
For the evolutionary game $\mathcal{G}$, we have the following results.
\begin{theorem}\label{theo1}
When $\lambda>1$, the evolutionary game $\mathcal{G}$ admits exactly two ESS at $p^{\star}=0$ with an invasion barrier $\bar{\varepsilon} = \min\left\{\frac{\delta_b}{\delta_r q(\lambda-1)},1\right\}$, and $p^{\star}=1$ with an invasion barrier $\bar{\varepsilon} =1$. When the initial configuration is such that $p<p_T$, the replicator dynamics takes the system to $p^{\star}=0$ and when $p>p_T$, the replicator dynamics takes the system to $p^{\star}=1$ with
\[p_T = \frac{\delta_b}{\delta_b+\lambda \delta_r} \]
corresponding to the mixed NE.
\end{theorem}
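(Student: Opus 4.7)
The plan is to proceed in two stages: first locate the rest points of the replicator dynamics~(\ref{eqREP}) and classify them by the sign of $\dot p$, then return to Definition~\ref{def1} and verify the ESS inequality at the two boundary candidates $p^\star\in\{0,1\}$, extracting the invasion barriers along the way.

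For the rest points of $\dot p = p(1-p)\bigl(p(\lambda-1)(1-s_h^\star) - s_h^\star\bigr)$ two are immediate, namely $p=0$ and $p=1$. To find interior zeros I would split according to the piecewise definition~(\ref{eqq}). When $p \geq \delta_b/(\delta_b+\delta_r)$ we have $s_h^\star=0$, so the bracket becomes $p(\lambda-1)>0$, ruling out any interior zero in this regime. When $p < \delta_b/(\delta_b+\delta_r)$, substituting the explicit expression of $s_h^\star$ into the bracket, clearing the common factor $(1-p)(\delta_b+\delta_r)$, and solving linearly in $p$ yields the unique interior candidate $p_T = \delta_b/(\delta_b+\lambda\delta_r)$. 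A consistency check that $p_T < \delta_b/(\delta_b+\delta_r)$, which holds because $\lambda>1$, confirms this value lies in the regime in which it was derived.

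Stability then follows from a direct sign analysis of $\dot p$ on $(0,1)$. On $(0,p_T)$ the bracket is strictly negative, so trajectories flow to $0$; on $(p_T,\delta_b/(\delta_b+\delta_r))$ the bracket is strictly positive; and on $(\delta_b/(\delta_b+\delta_r),1)$ one has $s_h^\star=0$ and hence $\dot p = p^2(1-p)(\lambda-1) > 0$. Consequently trajectories starting at $p>p_T$ converge to $1$ and those starting at $p<p_T$ converge to $0$, while $p_T$ is an unstable interior equilibrium, namely the mixed Nash. This yields the basins of attraction in the statement and shows that $p=0$ and $p=1$ are both locally asymptotically stable, hence the only candidates for ESS.

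To conclude I would verify Definition~\ref{def1} directly at each boundary point using the utilities~(\ref{eqUtiD}) and~(\ref{eqUtiH}). At $p^\star=1$ any mutant distribution $q\in[0,1)$ leads to a perturbed dove share $1-\varepsilon(1-q)$; the ESS inequality reduces, after cancellation of a factor $(1-q)$, to $U(D,\tilde p)>U(H,\tilde p)$, which in turn should hold for every $\varepsilon\in(0,1)$ once $s_h^\star$ is expanded, giving $\bar\varepsilon=1$. At $p^\star=0$ the perturbed dove share is $\varepsilon q$, and the inequality~(\ref{eq2}), after substituting the non-trivial branch of~(\ref{eqq}), reduces to an explicit upper bound of the form $\varepsilon < \delta_b/\bigl(\delta_r q(\lambda-1)\bigr)$, capped at $1$, matching the announced barrier. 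The main obstacle I anticipate is the bookkeeping induced by the max operator in~(\ref{eqq}): one must check both branches to confirm that no spurious equilibrium is introduced in the flat regime and that, for the mutant computation at $p^\star=0$, the relevant branch is indeed the non-trivial one for the range of $\varepsilon$ that actually enters the barrier.
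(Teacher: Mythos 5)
Your proposal is correct and follows essentially the same route as the paper's proof: identify $p_T=\delta_b/(\delta_b+\lambda\delta_r)$ as the interior rest point where the bracket in (\ref{eqREP}) vanishes, read off the basins of attraction from the sign of $\dot p$, and then verify Definition~\ref{def1} directly at $p^\star=0$ (using $s_h^\star=\delta_b/(\delta_b+\delta_r)$, yielding $\varepsilon<\delta_b/(\delta_r q(\lambda-1))$) and at $p^\star=1$ (using $s_h^\star=0$, yielding $\bar\varepsilon=1$). The only difference is organizational --- you do the phase-line analysis first and treat the two branches of the $\max$ in (\ref{eqq}) explicitly, which the paper handles more tersely --- but the key computations are identical.
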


\begin{proof}
First, we can easily verify that $p_T$ corresponds to a mixed NE by noticing that the utilities of $H$ and $D$ classes are identical at this point. Next, we use the definition \ref{def1}, to prove the results stated in Theorem \ref{theo1} corresponding to the invasion barrier. Let $x=(1-\varepsilon) p + \varepsilon q$, and $\bar{U}=U(p, x)-U(q, x)$. $U(p, x)$ and $U(q, x)$ are defined using the relation \eqref{eqEXp}.
\begin{eqnarray}
 \nonumber
 \bar{U}&=&  p(x(\lambda-1)-s_h^{\star}(1-x))+(1-p)(\lambda x s_h^{\star})-q(x(\lambda-1)-s_h^{\star}(1-x))-(1-q)(\lambda x s_h^{\star})\\ \nonumber \label{eq1proof}
   &=& (p-q)(x(\lambda-1)-s_h^{\star}(1-x))-(p-q)(\lambda x s_h^{\star})\\ \nonumber
   &=&(p-q)(x(\lambda-1)(1-s_h^{\star})- s_h^{\star}).  \\ 
\end{eqnarray}
\begin{enumerate}

\item In the first case, $p^{\star}=0$, this gives $s_h^{\star}=\frac{\delta_b}{\delta_b+\delta_r}$. We can solve for the condition when
  \begin{eqnarray}
 \nonumber
\bar{U}&>& 0\\ \nonumber
\Rightarrow -q(\varepsilon q(\lambda-1)(1-s_h^{\star})- s_h^{\star})&>& 0 \\ \nonumber \Rightarrow
(-\varepsilon q(\lambda-1)(1-s_h^{\star})+ s_h^{\star})&>&0 \\ \nonumber \label{eq2proof}
   \Rightarrow (-\varepsilon q(\lambda-1)\delta_r+ \delta_b)& > &0\\ \Rightarrow  \varepsilon < \frac{\delta_b}{\delta_r q(\lambda-1)}.\\ \nonumber
    \end{eqnarray}
Thus, from the definition \ref{def1} we conclude that $p^\star=0$ is an ESS with $\bar{\varepsilon}=\min\left\{\frac{\delta_b}{\delta_r q(\lambda-1)},1\right\}$ as an invasion barrier. Note that $p^{\star}=0$ is an ESS only if the population share of $D$ decreases, and that of $H$ increases, i.e., the replicator dynamics is negative or $\dot{p_{i}}<0$. If $s_h=s^{\star}_h$, $\dot{p}<0$ gives the following result:
 \begin{eqnarray}\label{eq3proof}
 \nonumber
 \dot{p}&<& 0 \\  \nonumber \Rightarrow
  p(1-p)(p(\lambda-1) -\frac{ (1-p) \delta_b-p \delta_r } { (\delta_b + \delta_r)(1-p)}(p(\lambda-1)+1))&<&0 \\  \nonumber \Rightarrow
 \frac{p((\delta_b+\delta_r \lambda)p-\delta_b)}{\delta_b+\delta_r}&<&0 \\  \nonumber \Rightarrow
  p&<&p_T\\ 
\end{eqnarray}
\item Now we prove that $p^{\star}=1$ is an ESS. In this case $s_h^{\star}=0$. Thus, the following result:
  $$\bar{U}=(1-\varepsilon(1-q))(\lambda-1),$$
  we have $\lambda>1$ $\Rightarrow$ $\bar{U}>0$. This implies, according to definition \ref{def1}, that $p^{\star}=1$ is an ESS $\bar{\varepsilon}=1$ as an invasion barrier. This  occurs if $\dot{p}>0$, i.e., when:
  \begin{equation}\label{rep}
    \frac{p((\delta_b+\delta_r \lambda)p-\delta_b)}{\delta_b+\delta_r}>0\Rightarrow p>\frac{\delta_b}{\delta_b+\delta_r\lambda}=p_T
  \end{equation}

\end{enumerate}

 \begin{flushright}
  $\square$
  \end{flushright}
\end{proof}

\section{Numerical analysis}\label{sec:numerical analysis}

In this section, we present numerical application of the proposed evolutionary game including a reputation system. All the results are based on the replicator dynamics which describes how the population evolves, and allows one to determine others performance metrics such as expected utility of players and the number of forwarded packets.

\begin{figure}[h!]
\begin{center}
\includegraphics[width=8.5cm,height=6cm]{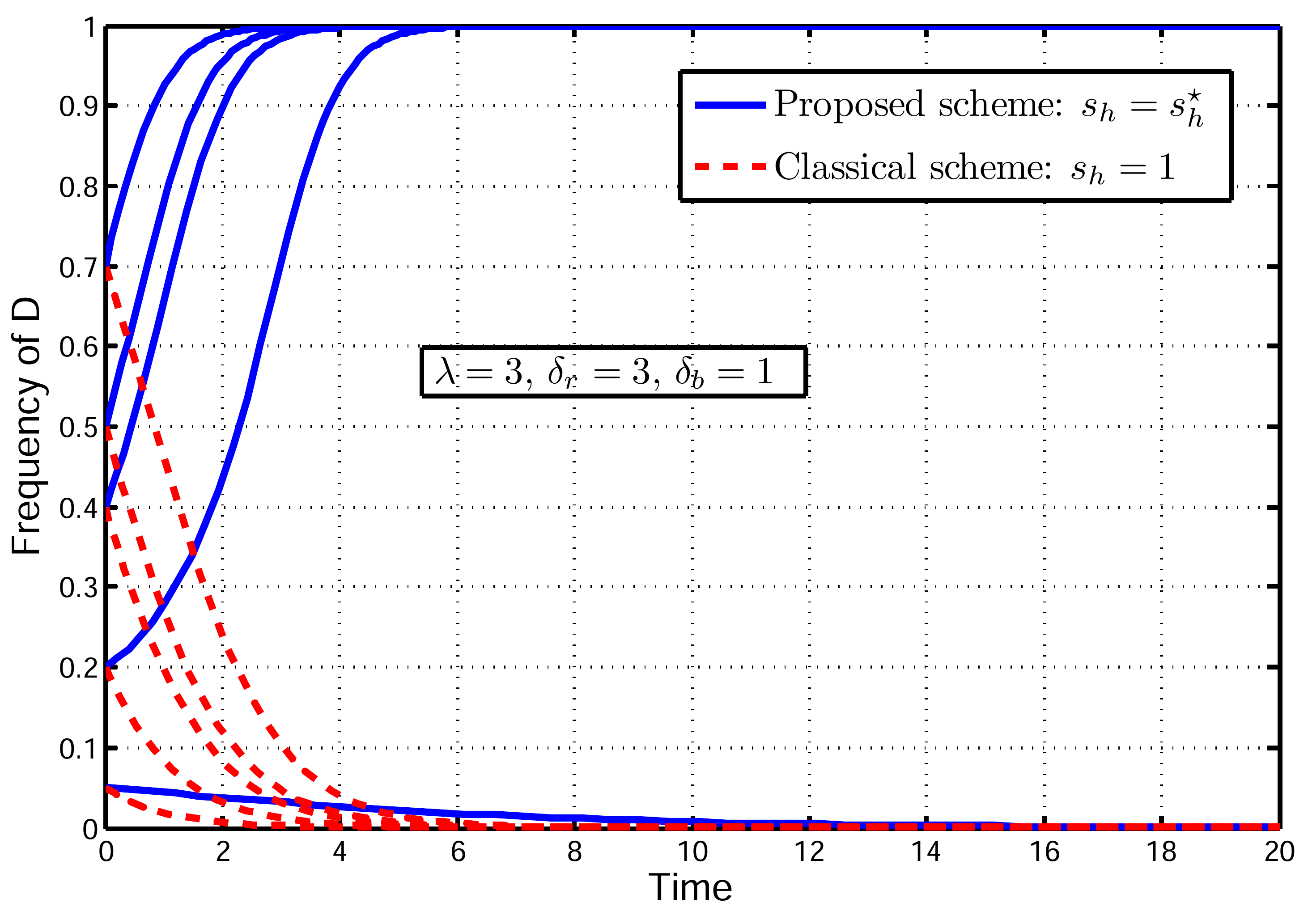}\\
\end{center}
\caption{Evolutionary dynamics of the Doves, nodes that play the strategy 'Cooperation' with a proability $s_h^*$ in the proposed game model, and previous packet forwarding game model where $s_h=1$. We plot for several initial frequency values.}\label{fig2}
\end{figure}

\begin{figure}[h!]
\begin{center}
\includegraphics[width=8.5cm,height=6cm]{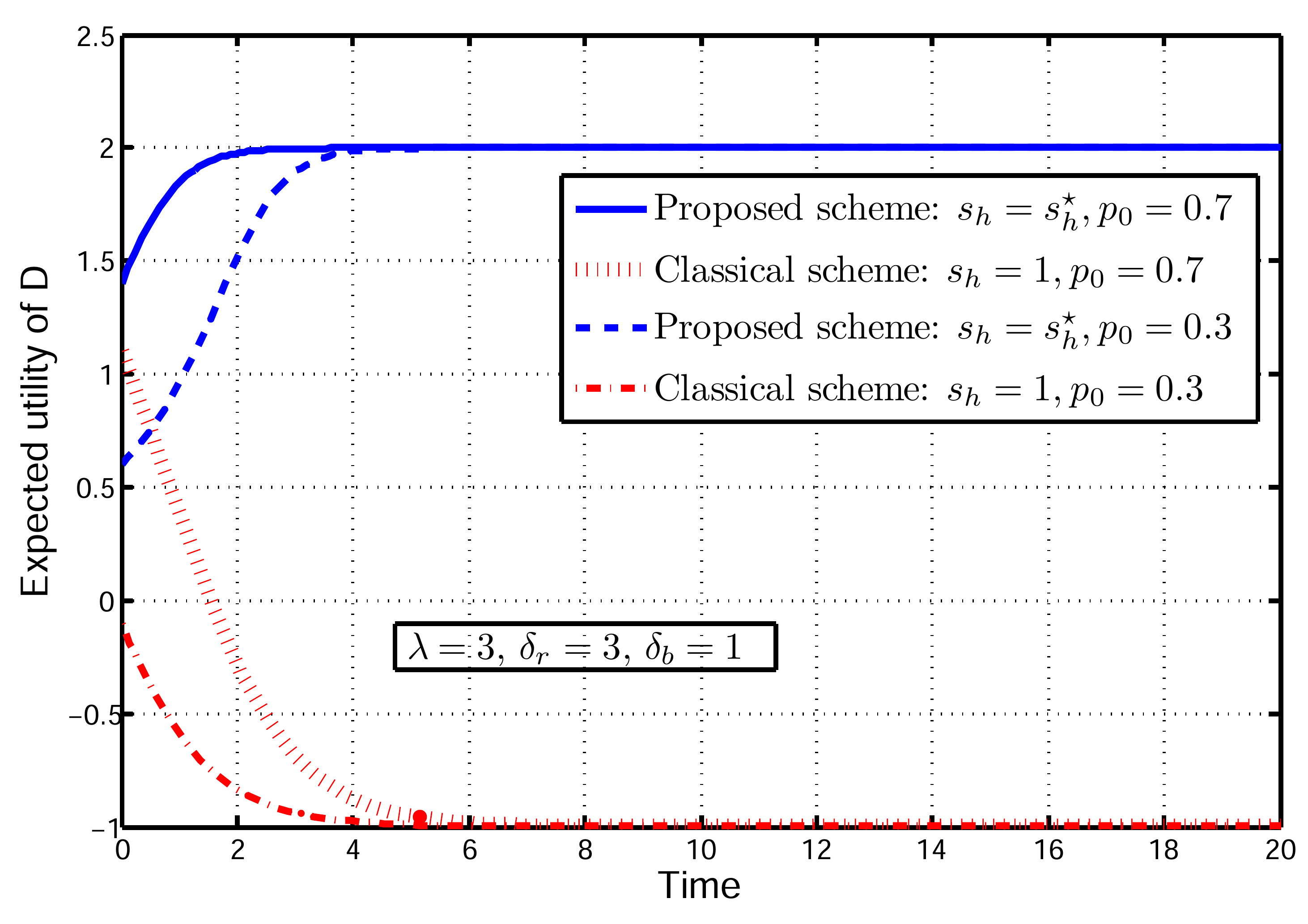}\\
\end{center}
\caption{The expected utility of $D$ in the proposed game model and previous packet forwarding game model, where $s_h=1$. We plot for several initial frequency values 0.7 and  0.3.}\label{fig4}
\end{figure}

\begin{figure}[h!]
\begin{center}
\includegraphics[width=8.5cm,height=6cm]{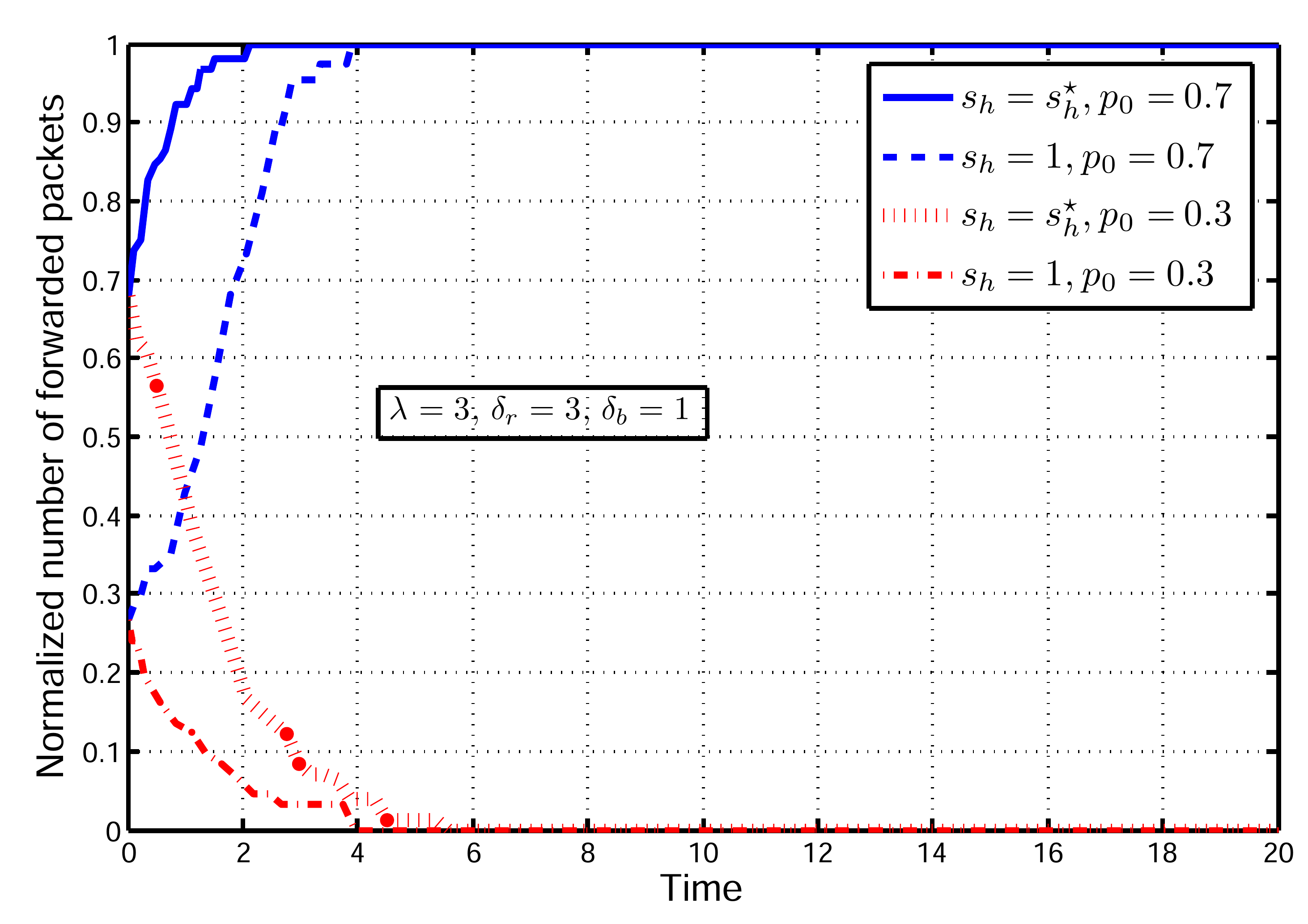}\\
\end{center}
\caption{Normalized number of forwarded packets for an Ad hoc network of $50$ nodes, in the proposed game model and previous packet forwarding game model, where $s_h=1$. We plot for two initial frequency values 0.7 and  0.3.}\label{fig5}
\end{figure}

We study the effect of the proposed reputation model on the evolutionary stable strategy of the game. Fig. \ref{fig2} presents the results, we consider two scenarios: 1) The curve in solid line corresponds to results provided by the proposed game model including a reputation system, and assuming that the cooperative nodes forward packets of non-cooperative nodes with some probability $s_h^{\star}$. 2) The curve in dashed line corresponds to results provided by putting $s_h=1$, meaning that cooperative nodes forward all the time, which corresponds to the previous packet forwarding game introduced in \cite{hub}, \cite{b1}, \cite{c2i7}, \cite{ser1}, etc. It is seen that using our new formulation, which integrates a reputation mechanism as a constraint, the system could converge towards a cooperative state, by carefully choosing the game settings. Thus, global cooperation could be guaranteed after a given time. Whereas, when the game does not include the reputation constraint, the population converges to the strategy non-cooperation, which is the unique evolutionary stable strategy of the game, regardless of the initial condition and game settings.

The results given by the Figure \ref{fig2} can be used to characterize the expected utility of players. Fig. \ref{fig4} presents the results of both cases $s_h=s_h^{\star}$ and $s_h=1$. From these figures, we observe that the utilities evolve over time in the same way that the proportion of the considered population, corresponding to $D$ in that case. Thus, the proposed model provides better results, it promotes cooperation among nodes.

Indeed, in order to show the influence level of these results on network performance, we consider a network composed of $50$ nodes, randomly placed in surface of $1000m \times 1000m$, with a transmission range equals to $150m$, and plot normalized number of forwarded packets within a network, using the proposed game model with constraint, and that introduced in previous works \cite{hub}, \cite{b1}, \cite{c2i7}, \cite{ser1}, etc. defined without any constraint and assuming that the cooperative nodes forward all the time, i.e.,  $s_h=1$. We assume that all the nodes need to send $10$ packets to a given destination. Fig. \ref{fig5} represents the results for the following game settings: $\lambda=3$, $\delta_r=3$ and $\delta_b=1$. It clearly shows a direct influence, because the number of forwarded packets is strongly linked to the cooperative nodes proportion in packet forwarding.

\textbf{Remark: }While setting $\delta_b=0$ can indeed make $p^{\star}=1$ the only ESS, this may not be a suitable reputation model for the MANET framework due to several reasons. Firstly, setting $\delta_b=0$ will completely discourage $D$ from forwarding packets from $H$ class, which may also include new users to the MANET, thereby discouraging new users as they might be unable to send their packets without increasing their reputation. Secondly, note that $D$ may not always forward packets from $D$ as in practice the channel conditions between the nodes also play a big role in determining the resource cost and therefore the utility gained by forwarding (which we have not accounted for in this work). Accounting for channel fading due to path loss or small-scale fading will therefore, be a relevant extension of this work. These considerations show that reputation model parameters must be carefully designed in practice.

\section{Conclusion}\label{sec:conc}
The contribution of this paper is to propose a new formulation of the packet forwarding game \cite{hub}, introducing a reputation system, which modifies reputation based on the reputation class of the packet source, i.e., cooperative or non-cooperative. The aim is to motivate node cooperation using a simple and efficient mechanism. As a smaller reputation is lost by not forwarding packets from selfish users (classified by the reputation system), cooperative users will effectively forward the packets from other cooperative users and may avoid forwarding packets from selfish users. Effectively, we have demonstrated using evolutionary game theory concepts that, global cooperation in the network can be achieved under some conditions we stated related to the game settings with a low computational complexity. Finally, through simulations, we have shown that in terms of the number of forwarded packets in the MANET, the proposed game model provides significant gains over the game model where the cooperative nodes forward packet regardless of the opponent's behavior. As an extension of the present work, we propose to study the multi-hop case, where the interaction involves more than two players. Another relevant extension would be to account for channel fluctuations and the resulting utility function which might result in the cooperative class users forwarding packets from selfish users and not another cooperative user despite the reputation losses, due the channel conditions being favorable.

\end{document}